\newtheorem{theorem}{Theorem}[]
\newtheorem{proposition}{Proposition}[]
\newtheorem{lemma}{Lemma}
\newtheorem{remark}{Remark}
\newtheorem{problem}{Problem}
\newtheorem{assumption}{Assumption}
\newtheorem{definition}{Definition}[]
\newcommand{\R}{\mathbb{R}}
\newcommand{\ub}{\mathbf{u}}
\newcommand{\x}{\mathbf{x}}
\newcommand{\A}{\mathbf{A}}
\newcommand{\B}{\mathbf{B}}
\newcommand{\M}{\mathbf{M}}
\newcommand{\Sc}{\mathcal{S}}
\newcommand{\Cn}{\mathcal{C}_n}
\newcommand{\Pc}{\mathbb{P}}
\title{\LARGE \bf
On the Hardness of Learning to Stabilize Linear Systems
}
\author{Xiong Zeng$^1$ \quad\;\; Zexiang Liu$^1$ \quad\;\; Zhe Du$^1$ \quad\;\; Necmiye Ozay$^1$ \quad\;\; Mario Sznaier$^2$
 % <-this % stops a space
 \thanks{This work is supported in part by ONR CLEVR-AI MURI (\#N00014-21-1-2431) and NSF CNS Grant \#1931982.}
\thanks{$^1$ Department of Electrical Engineering and Computer Science, University
of Michigan Ann Arbor, MI 48105. Emails: \{zengxion, zexiang, zhedu, necmiye\}@umich.edu.
       }%
\thanks{$^2$ Department of Electrical and Computer Engineering,  Northeastern University, Boston, MA 02115. Email: msznaier@ece.neu.edu.}
}
\begin{document}

\maketitle
\thispagestyle{empty}
\pagestyle{empty}

%%%%%%%%%%%%%%%%%%%%%%%%%%%%%%%%%%%%%%%%%%%%%%%%%%%%%%%%%%%%%%%%%%%%%%%%%%%%%%%%
\begin{abstract}
Inspired by the work of Tsiamis et al. \cite{tsiamis2022learning}, in this
paper we study the statistical hardness of learning to stabilize
linear time-invariant systems. Hardness is measured by the number of samples required to achieve a learning task with a given probability. The work in \cite{tsiamis2022learning} shows that there
exist system classes that are hard to learn
to stabilize with the core reason being the hardness of identification. Here we present a class of systems that can be
easy to identify, thanks to a non-degenerate noise process that
excites all modes, but the sample complexity of stabilization
still increases exponentially with the system dimension.
We tie this result to the hardness of co-stabilizability for this
class of systems using ideas from robust control. 

\end{abstract}

%%%%%%%%%%%%%%%%%%%%%%%%%%%%%%%%%%%%%%%%%%%%%%%%%%%%%%%%%%%%%%%%%%%%%%%%%%%%%%%%
\section{INTRODUCTION}

Learning-based control plays an increasingly important role in many application domains such as power systems \cite{liu2022stability}, robotics \cite{brunke2022safe}, self-driving cars \cite{schwarting2018planning}, where it might be hard to perfectly model the system and its environment. Many learning-based control algorithms assume the existence of an initial stabilizing controller in order to simplify their analysis. Such simplifying assumptions are prevalent both in model-based \cite{cohen2019learning,simchowitz2020improper,zheng2021sample,ouyang2019posterior,faradonbeh2020optimism,dean2018regret,sattar2021identification} and model-free \cite{fazel2018global,cassel2020logarithmic,zhang2020policy,tang2021analysis,li2021distributed,mohammadi2021convergence,yang2019provably} learning-based control algorithms. However, learning to stabilize is a fundamental problem in learning-based control, with several algorithms tackling this issue\cite{abbasi2011regret,faradonbeh2018finite,lale2022reinforcement,chen2021black,hu2022sample,dai2020data,perdomo2021stabilizing}.  

Understanding the fundamental limits or the corner cases of learning-to-stabilize algorithms can inform future algorithm design and is crucial for applications of these algorithms in safety-critical domains. Therefore, it is important to understand how the system properties affect the performance of the learning-to-stabilize algorithms. In particular, we are interested in the number of samples required to learn a stabilizing controller with a given probability as a performance measure. We say a class of systems is hard to learn to stabilize if this number grows exponentially with the system dimension, independent of the algorithm choice.  

We focus on fully observed linear time-invariant systems and consider the task of learning a static stabilizing linear state-feedback controller from a single trajectory. In this setting, Tsiamis et al. \cite{tsiamis2022learning} show that when the process noise is degenerate, i.e. the noise covariance matrix being singular, there are some classes of systems that are hard to learn to stabilize, by transferring the hardness of learning-to-stabilize into the hardness of system identification. The system classes constructed in their work are based on a (marginally) stable hard-to-stabilize pair. In this work, we significantly extend the class of systems that are hard to learn to stabilize by considering systems that are, even though close in the parameter space and generate similar state-input trajectories, not co-stabilizable with the same controller. This is achieved by a novel analysis technique that uses Ackermann's formula to compute all stabilizing linear state-feedback gains analytically and characterize the minimal level of perturbations to the parameters that render co-stabilizability infeasible. Different from the prior work, our analysis allows us to consider system classes that may only
include systems with eigenvalues strictly outside of the unit circle, for which stabilizability is arguably more critical.

%\subsection{Notations}
\noindent{\bf Notation:} We use lower case, lower case boldface, and upper case boldface letters to denote scalars,
vectors, and matrices respectively. For a matrix $\mathbf{M}\in \mathbb{R}^{m\times n}$, $\mathbf{M}^{\top}$ denotes its transpose,  $M^{(i,j)}$ denotes its element in the $i^{th}$ row and the $j^{th}$ column. For a square matrix $\mathbf{M} \in \mathbb{R}^{n \times n}$, $\mathbf{M}\succ 0$ ($\succeq 0$) denotes that $\mathbf{M}$ is positive definite (positive semidefinite), $\rho(\M)$ denotes its spectral radius, and $\operatorname{det}(\M)$ denotes its determinant. For a vector $\mathbf{v} \in \mathbb{R}^n$, its $i^{th}$ element is denoted by $v^{(i)}$. By $\operatorname{poly}(\cdot)$ we denote a polynomial function of
its arguments. By $\operatorname{exp}(\cdot)$ we denote an exponential function of its arguments. We use $\mathbf{I}_n$ to denote the identity matrix in $\mathbb{R}^{n \times n}$. A sequence of vectors $\x_{t}$, $\x_{t+1}$, ..., $\x_{t+N}$ is denoted by $\x_{t:t+N}$ for short. {By convention, $\x_{i:j}$ is an empty set if $j<i$.}

\section{Problem Setup and Preliminary Notions}\label{sec2}
%\subsection{Discrete-time Linear Time-Invariant System}
We consider the following fully-observed discrete-time linear time-invariant (LTI) system:
\begin{equation}
\mathbf{x}_{t+1} =\mathbf{A} \mathbf{x}_t + \mathbf{B} \mathbf{u}_t +  \mathbf{w}_t,
\label{eq1}
\end{equation}
where $\mathbf{x}_t \in \mathbb{R}^{n}$, $\mathbf{u}_t \in \mathbb{R}^{p}$, $\mathbf{w}_t \in \mathbb{R}^{n}$ are the state, input, and process noise at time $t$. For simplicity, we assume $ \mathbf{x}_{0} = \mathbf{0}$. The random process $\mathbf{w}_t$ over $t$ is zero-mean i.i.d. Gaussian, with covariance matrix $\sigma_w^2 \mathbf{I}_n$. 
In the remainder of the paper, we denote a system in the form \eqref{eq1} by the tuple $(\mathbf{A,B})$.

Let $\mathcal{C}_n$ be a class of systems $(\mathbf{A}, \mathbf{B})$ in dimension $n$, parameterized by some unknown parameters. 

\begin{definition}
A \emph{learning-to-stabilize algorithm} $\pi$ with respect to the class $\mathcal{C}_n$ is a sequence of functions $\pi =\{\pi_t\}_{t=0}^N$. For $t=0,$ ..., $N-1$, $\pi_t(\mathbf{u}_{0:t-1},\mathbf{x}_{0:t})$ specifies the probability distribution of the input $\mathbf{u}_t\in \R^p$ at time $t$, conditioned on the previous state-input trajectory $\mathbf{u}_{0:t-1}$ and $\mathbf{x}_{0:t}$. Then at $t=N$, the function $\pi_N$ maps the entire state-input trajectory $\mathbf{u}_{0:N-1}$ and $\mathbf{x}_{0:N}$ to a state-feedback gain in $\mathbb{R}^{p\times n}$.
This learned state-feedback gain $\hat{\mathbf{K}}_N = \pi_N(\ub_{0:N-1}, \x_{1:N})$ is called \emph{stabilizing} if ${\rho(\mathbf{A}+\mathbf{B} \hat{\mathbf{K}}_N)<1}$.
    \label{def1}
\end{definition}

Intuitively, the algorithm $\pi$ consists of an exploration policy in the first $N-1$ steps and decides on the gain $\hat{\mathbf{K}}_N$ using the data generated during exploration at step $N$. As such, exciting the system with some open-loop persistently exciting input as in data-driven control~\cite{de2019formulas}, applying some i.i.d. input and computing the gain afterward using the generated data~\cite{mania2019certainty}, or active learning policies can all be considered as special types of learning-to-stabilize algorithms.

Given a system $\Sc=(\A, \B) \in \Cn$ and a learning-to-stabilize algorithm $\pi$, let $\Pc_{\Sc,\pi}^N$ denote the probability measure of the input-state samples $\ub_{0:N-1}$ and $\x_{1:N}$ (with $f^N_{\Sc,\pi}$ denoting the corresponding probability density function), and $\mathbb{E}^N_{\mathcal{S,\pi}}$ denote the expectation of the respective probability measure.
We make the following assumptions on the class $\mathcal{C}_{n}$ and the algorithm $\pi$.

\begin{assumption}
    For all $n\geq 1$ and all $(\A,\B) \in \mathcal{C}_n$, the norm of matrices $\mathbf{A}, \mathbf{B}$ is bounded by a positive constant $M$, that is, $\max_{n\geq 1,(\A,\B)\in \mathcal{C}_n} \max\left\{\|\mathbf{A}\|_2,\|\mathbf{B}\|_2 \right\} \leq M$. 
    \label{asm1}
\end{assumption}
\begin{assumption}The second moment of the norm of the input signal $\ub_t$, generated by the algorithm $\pi$, is bounded by some constant $\sigma_u^2>0$. That is, $\mathbb{E}_{\Sc, \pi}\left[ \left\|\mathbf{u}_t\right\|_2^2\right] \leq \sigma_u^2$.
    \label{asm2}
\end{assumption}

Next, we recall the definition of $\operatorname{poly}(n)$-stabilizable system classes from  \cite{tsiamis2022learning}. If a class $\Cn$ of discrete-time LTI systems is $\operatorname{poly}(n)$-stabilizable, it is statistically easy to learn linear state-feedback controllers to stabilize systems in this class.  
\begin{definition}[$\operatorname{poly}(n)$-stabilizable system classes \cite{tsiamis2022learning}]
Under Assumptions~\ref{asm1} and \ref{asm2},  a class $\mathcal{C}_n$ of systems is $\operatorname{poly}(n)$-stabilizable if there exists a learning-to-stabilize algorithm $\pi$ such that for all confidence levels $0 \leq \delta<1$
\begin{align}
\inf_{\mathcal{S} \in \mathcal{C}_n} \mathbb{P}^N_{\mathcal{S},\pi  }\left(\rho\left(\mathbf{A} + \mathbf{B} \pi_N(\ub_{0:N-1}, \x_{1:N})\right) < 1\right) > 1 - \delta,
\end{align}
if the sample size $N$ satisfies
$N \sigma_u^2 \geq \operatorname{poly}(n, \log (1 / \delta), M).$
\label{def2}
\end{definition}

This definition essentially tells that a class is $\operatorname{poly}(n)$-stabilizable if it is possible to find an algorithm that can learn a stabilizing linear state-feedback controller with high probability, even for the worst-case system in this class, as long as there are polynomially many samples in the system dimension $n$. Since the polynomial dependency on $n$ is mild, we say learning to stabilize is \emph{easy} for this class. On the other hand, being \emph{hard} refers to a class that is not $\text{poly}(n)$-stabilizable.

A closely related concept is the hardness of identification \cite{tsiamis2021linear}, i.e., whether the system can be learned with $\epsilon$ accuracy using $\text{poly}(n, \log(1/\delta), 1/\epsilon)$ many samples.
When the process noise is degenerate, by transferring the hardness of learning to stabilize into the hardness of system identification, Tsiamis et al. \cite{tsiamis2022learning} prove that there exists a class of systems, for which the worst-case sample complexity of learning to stabilize is at least exponential with the system dimension. Our work is complementary as we seek to answer the following question.

\begin{problem}
    Is there a class of linear systems that are not $\operatorname{poly}(n)$-stabilizable when the process noise $\mathbf{w}_t$ is non-degenerate? 
    \label{prob1}
\end{problem}

The following lemma follows directly from Definition~\ref{def2}. 

\begin{lemma}
    For two classes of systems $\mathcal{C}^1_n$ and $\mathcal{C}^2_n$, if $\mathcal{C}^1_n$ is a subset of $\mathcal{C}^2_n$ and $\mathcal{C}^1_n$ is not $\operatorname{poly}(n)$-stabilizable, neither is $\mathcal{C}^2_n$.
    \label{lem1}
\end{lemma}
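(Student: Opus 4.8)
The plan is to establish the contrapositive of the statement: if $\mathcal{C}^2_n$ is $\operatorname{poly}(n)$-stabilizable, then so is $\mathcal{C}^1_n$. This immediately yields the lemma, since, with the subset relation $\mathcal{C}^1_n \subseteq \mathcal{C}^2_n$ fixed, the assertion ``$\mathcal{C}^1_n$ not $\operatorname{poly}(n)$-stabilizable implies $\mathcal{C}^2_n$ not $\operatorname{poly}(n)$-stabilizable'' is logically equivalent to ``$\mathcal{C}^2_n$ $\operatorname{poly}(n)$-stabilizable implies $\mathcal{C}^1_n$ $\operatorname{poly}(n)$-stabilizable.''

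First I would unwind Definition~\ref{def2}. Suppose $\mathcal{C}^2_n$ is $\operatorname{poly}(n)$-stabilizable. Then there exist a single algorithm $\pi$ and a polynomial $p(n,\log(1/\delta),M)$ such that, for every $\delta\in[0,1)$, whenever $N\sigma_u^2 \geq p(n,\log(1/\delta),M)$ one has $\inf_{\mathcal{S}\in\mathcal{C}^2_n}\mathbb{P}^N_{\mathcal{S},\pi}(\rho(\mathbf{A}+\mathbf{B}\pi_N)<1) > 1-\delta$. The key observation is that the constituent functions $\pi_t$ of Definition~\ref{def1} act only on the observed trajectory $(\ub_{0:t-1},\x_{0:t})$ and are blind to which class is declared; consequently the very same $\pi$ is a legitimate learning-to-stabilize algorithm with respect to $\mathcal{C}^1_n$, and the induced measure $\mathbb{P}^N_{\mathcal{S},\pi}$ is well defined for every $\mathcal{S}\in\mathcal{C}^1_n$.

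The decisive step is a monotonicity property of the infimum: shrinking the index set can only increase it. Because $\mathcal{C}^1_n \subseteq \mathcal{C}^2_n$, the worst case over the smaller class is no worse than over the larger one, so
\begin{align}
\inf_{\mathcal{S}\in\mathcal{C}^1_n}\mathbb{P}^N_{\mathcal{S},\pi}\big(\rho(\mathbf{A}+\mathbf{B}\pi_N)<1\big) \;\geq\; \inf_{\mathcal{S}\in\mathcal{C}^2_n}\mathbb{P}^N_{\mathcal{S},\pi}\big(\rho(\mathbf{A}+\mathbf{B}\pi_N)<1\big) \;>\; 1-\delta .
\end{align}
Hence $\pi$ delivers the same high-probability stabilization guarantee over $\mathcal{C}^1_n$ under the same sample budget. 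The one bookkeeping point to verify is that the uniform bounds of Assumptions~\ref{asm1} and~\ref{asm2}—the norm bound $M$ and the input second-moment bound $\sigma_u^2$—that are furnished for $\mathcal{C}^2_n$ continue to hold over the subset $\mathcal{C}^1_n$, so the identical polynomial certificate $p$ and the same $\sigma_u^2$ apply without modification.

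I expect no genuinely hard step here: the entire content is the elementary fact that taking an infimum over a smaller set yields a value at least as large, combined with the class-agnostic nature of the algorithm's component functions. The only point demanding a moment of care is confirming that both the algorithm and its sample-complexity certificate transfer verbatim from $\mathcal{C}^2_n$ to $\mathcal{C}^1_n$; once that is granted, the contrapositive—and therefore the lemma—follows at once.
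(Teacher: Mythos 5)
Your proof is correct and matches the paper's intent: the paper offers no explicit proof, asserting the lemma ``follows directly from Definition~\ref{def2},'' and your argument---take the contrapositive, reuse the same class-agnostic algorithm $\pi$, and invoke the monotonicity of the infimum under the inclusion $\mathcal{C}^1_n \subseteq \mathcal{C}^2_n$---is precisely that direct argument spelled out. No gaps; the bookkeeping about Assumptions~\ref{asm1} and~\ref{asm2} transferring to the subset is the right detail to check.
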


Lemma \ref{lem1} turns Problem \ref{prob1} into the problem of finding a pair of systems that are not $\operatorname{poly}(n)$-stabilizable. Specifically, if a pair of systems is not $\operatorname{poly}(n)$-stabilizable, then any class containing this pair of systems is also not $\operatorname{poly}(n)$-stabilizable.

{The next two definitions are related to the co-stabilizability and distinguishability of a pair of systems.}

\begin{definition}[Co-stabilizability]
A pair of systems $\mathcal{S}_1=(\A_1, \B_1)$ and $\mathcal{S}_2=(\A_2, \B_2)$ is co-stabilizable if there exists a state-feedback gain $\mathbf{K}$ such that both $\A_1 + \B_1 \mathbf{K}$ and $\A_2 + \B_2 \mathbf{K}$ are stable.
\label{def3}
\end{definition}

\begin{remark}
Co-stabilization problem for two dynamical systems has been studied in robust control \cite{zhou1998essentials}, e.g., by using the gap metric \cite{georgiou1990optimal}.
\end{remark}

We will use $\operatorname{KL}$ divergence to measure the distance between the distributions of state-input trajectories generated when the same exploration policy is applied to two different systems. A small $\operatorname{KL}$ divergence means that it is hard to distinguish two systems.

\begin{definition}[Kullback–Leibler (KL) divergence] The KL divergence between the
continuous distributions $\mathbb{P}$ and $\mathbb{Q}$ is defined as
$$
\operatorname{KL}(\mathbb{P}, \mathbb{Q})= \int_{-\infty}^{+\infty} p(x) \log \frac{p(x)}{q(x)} d x ,
$$
where $p(x)$ and $q(x)$ denote the probability densities of $\mathbb{P}$ and $\mathbb{Q}$ and $p(x)$ is absolutely continuous with respect to $q(x)$.
\label{df1}
\end{definition}

Our main insight behind constructing not $\operatorname{poly}(n)$-stabilizable pairs in the next section is as follows. If we have two different systems and excite all the modes of these systems, as we increase the trajectory length $N$, we expect that the $\operatorname{KL}$ divergence between the trajectories will increase and we will be able to distinguish the systems. On the other hand, if the $\operatorname{KL}$ divergence remains small independent of the exploration policy, then we cannot expect the learning-to-stabilize algorithm to result in significantly different controller gains. Moreover, if these two systems are not co-stabilizable, then learning to stabilize these systems will be hard.

\section{Hard to Learn to Stabilize Systems}
\label{sec3}

Consider the following system of the form (\ref{eq1}) with $(\mathbf{A,B})$ defined parametrically as
\begin{equation}
\begin{aligned}
&\mathbf{A} =
\left[\begin{array}{ccccc}
r & v & 0 & \cdots & 0 \\
0 & 0 & v & \cdots & 0 \\
& & \ddots & \ddots & \\
0 & 0 & 0 & \cdots & v\\
0 & 0 & 0 & \cdots & 0
\end{array}\right] \in \mathbb{R}^{n \times n}, \; \mathbf{B}=
 \left[\begin{array}{c}
b^{(1)} \\
0 \\
\vdots \\
0 \\
v
\end{array}\right] \in \mathbb{R}^{n},
\end{aligned}
\label{eq2}
\end{equation}
where $n\geq 2$, $ r > 1$, $0< v< \frac{r - 1}{2} $, and $b^{(1)} \geq 0$.

\begin{remark}
    When $b^{(1)}=-v^n/r^{n-1}$, the system in \eqref{eq2} is uncontrollable. To avoid this trivially hard-to-stabilize case, we let $b^{(1)}\geq0$.
\end{remark}

The following proposition proves that there exist two systems in the parametric family (\ref{eq2}) differing only in $b^{(1)}$, such that for a feedback gain to be able to stabilize both systems at the same time, the difference in $b^{(1)}$ should be exponentially small in the system dimension.
\begin{proposition} Let $\mathcal{S}_1=(\mathbf{A} ,\mathbf{B}_1)$, and $\mathcal{S}_2=(\mathbf{A} ,\mathbf{B}_2)$, where $\mathbf{A}$ is as in \eqref{eq2}, and $\mathbf{B}_1$   and $\mathbf{B}_2$ equal to $\mathbf{B}$ in \eqref{eq2} with $b^{(1)} = 0$ and $b^{(1)} = m \geq 0$, respectively. Let $\mathbf{K}\in \mathbb{R}^{1\times n}$ be any stabilizing linear state-feedback gain for $\mathcal{S}_1$ such that $\rho(\mathbf{A}+\mathbf{B}_1 \mathbf{K})<1$. Let $   p_{1}^{cl},p_{2}^{cl},\dots,p_{n}^{cl}$ be the eigenvalues of $\mathbf{A}  + \mathbf{B}_1 \mathbf{K}$ with $0\leq |p_{1}^{cl}|,|p_{2}^{cl}|,\dots,|p_{n}^{cl}|<1$.
Then $\rho (\mathbf{A}  + \mathbf{B}_2 \mathbf{K})<1$  only if
\begin{equation}
   0\leq m < v^n \prod_{i=1}^{n} \frac{1+p_i^{cl}}{r -p_{i}^{cl}}.
   \label{prop2-(4)}
\end{equation}
\label{prop2}
\end{proposition}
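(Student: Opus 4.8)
The plan is to translate co-stabilizability of the pair into a comparison of the two closed-loop characteristic polynomials and then read the bound off a single necessary condition for Schur stability. Write $\mathbf{K}=\begin{bmatrix}k_1&\cdots&k_n\end{bmatrix}$ and set $\chi_j(\lambda)=\det(\lambda\mathbf{I}_n-\A-\B_j\mathbf{K})$ for $j=1,2$. Since $\B_2=\B_1+m\,\mathbf{e}_1$ with $\mathbf{e}_1$ the first standard basis vector, the difference $\B_2\mathbf{K}-\B_1\mathbf{K}=m\,\mathbf{e}_1\mathbf{K}$ is rank one. First I would apply the Sylvester determinant identity $\det(\lambda\mathbf{I}_n-\A-\B_j\mathbf{K})=\det(\lambda\mathbf{I}_n-\A)\bigl(1-\mathbf{K}(\lambda\mathbf{I}_n-\A)^{-1}\B_j\bigr)$ to both systems and subtract, obtaining
\[ \chi_2(\lambda)-\chi_1(\lambda)=-m\,\mathbf{K}\,\mathrm{adj}(\lambda\mathbf{I}_n-\A)\,\mathbf{e}_1. \]
The structural heart of the argument is that the companion-like shape of $\A$ forces $\mathrm{adj}(\lambda\mathbf{I}_n-\A)\,\mathbf{e}_1=\lambda^{\,n-1}\mathbf{e}_1$: solving $(\lambda\mathbf{I}_n-\A)\mathbf{y}=\mathbf{e}_1$ propagates zeros up the shift chain and leaves $\mathbf{y}=(\lambda-r)^{-1}\mathbf{e}_1$, while $\det(\lambda\mathbf{I}_n-\A)=(\lambda-r)\lambda^{\,n-1}$. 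Hence the two closed-loop polynomials differ only in the $\lambda^{n-1}$ coefficient,
\[ \chi_2(\lambda)=\chi_1(\lambda)-m\,k_1\,\lambda^{\,n-1}. \]

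Next I would express $k_1$ through the prescribed poles $p_1^{cl},\dots,p_n^{cl}$. Evaluating the polynomial identity for $\chi_1$ at $\lambda=r$ annihilates every term carrying the factor $(\lambda-r)$ and leaves $\chi_1(r)=-k_1 v^n$; combined with $\chi_1(\lambda)=\prod_{i=1}^n(\lambda-p_i^{cl})$ this gives
\[ k_1=-\frac{1}{v^n}\prod_{i=1}^n\bigl(r-p_i^{cl}\bigr). \]
Because $r>1>|p_i^{cl}|$, the product $\prod_i(r-p_i^{cl})$ is a positive real number (complex poles enter in conjugate pairs as $|r-p|^2$), so $k_1<0$ and the denominator of the claimed bound is positive; similarly $\prod_i(1+p_i^{cl})>0$, so the right-hand side of \eqref{prop2-(4)} is a genuine positive threshold. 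This computation of $k_1$ is precisely the ingredient that Ackermann's formula would supply.

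Finally I would invoke a necessary condition for Schur stability. If $\rho(\A+\B_2\mathbf{K})<1$ and $\chi_2(\lambda)=\prod_i(\lambda-q_i)$ with $|q_i|<1$, then $(-1)^n\chi_2(-1)=\prod_i(1+q_i)>0$. Substituting $\lambda=-1$ into $\chi_2=\chi_1-m k_1\lambda^{n-1}$, using $\chi_1(-1)=(-1)^n\prod_i(1+p_i^{cl})$ together with the value of $k_1$ above, gives
\[ (-1)^n\chi_2(-1)=\prod_{i=1}^n\bigl(1+p_i^{cl}\bigr)-\frac{m}{v^n}\prod_{i=1}^n\bigl(r-p_i^{cl}\bigr), \]
and positivity is exactly $m<v^n\prod_{i=1}^n\frac{1+p_i^{cl}}{r-p_i^{cl}}$, which is the claim. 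The main obstacle is not any single estimate but getting the structural reduction right: verifying the adjugate identity and the evaluation $\chi_1(r)=-k_1 v^n$ from the specific sparsity of $\A$, and then tracking the $(-1)^n$ sign bookkeeping carefully so that the single boundary test at $\lambda=-1$ yields the sharp exponentially small threshold rather than a weaker necessary condition.
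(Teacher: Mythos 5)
Your proof is correct, and it reaches the paper's two key intermediate facts by different machinery. The paper's proof has the same skeleton as yours --- establish $\hat{\Delta}_n^{\mathrm{cl}}(z)=\Delta_n^{\mathrm{cl}}(z)-mk_1z^{n-1}$, establish $k_1=-v^{-n}\prod_{i=1}^n(r-p_i^{cl})$, then apply an endpoint Schur-stability test --- but it obtains these facts via Ackermann's formula: it inverts the controllability matrix explicitly (showing $\mathbf{e}_n^{\top}\mathbf{Ctr}^{-1}_{(\mathbf{A},\mathbf{B}_1)}=[v^{-n},0,\dots,0]$), reads off $k_1$ from $\mathbf{K}=-\mathbf{e}_n^{\top}\mathbf{Ctr}^{-1}\Delta^{\mathrm{cl}}(\mathbf{A})$, and computes both closed-loop characteristic polynomials coefficient by coefficient to exhibit their difference. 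You instead use the matrix determinant lemma for the rank-one update $\B_2\mathbf{K}-\B_1\mathbf{K}=m\,\mathbf{e}_1\mathbf{K}$, the adjugate identity $\mathrm{adj}(\lambda\mathbf{I}_n-\A)\mathbf{e}_1=\lambda^{n-1}\mathbf{e}_1$, and the evaluation $\chi_1(r)=-k_1v^n$ (valid because $\det(r\mathbf{I}_n-\A)=0$ and only the $(1,n)$ cofactor survives), which recovers the Ackermann output without ever forming or inverting the controllability matrix. What your route buys is self-containedness and less bookkeeping: no appeal to Ackermann's formula as an external lemma, no full coefficient expansion, and only the single binding Jury condition at $z=-1$ (you correctly recognize that the condition at $z=1$ is automatic here since $m\geq 0$ and $k_1<0$, whereas the paper invokes both). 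What the paper's route buys is the explicit coefficient formula for $\Delta_n^{\mathrm{cl}}(z)$ and the controllability-matrix computation, which the authors reuse when remarking that the argument extends to diagonal systems whose controllability matrices are Vandermonde. The one place where your write-up is thinner than it should be is the claim $\chi_1(r)=-k_1v^n$: as stated it is a heuristic ("annihilates every term carrying the factor $(\lambda-r)$"), and a complete version needs the short cofactor computation showing $\mathrm{adj}(r\mathbf{I}_n-\A)\B_1=v^{n}\mathbf{e}_1$ --- i.e., that the minors multiplying $k_2,\dots,k_n$ all contain the vanishing first column while the minor multiplying $k_1$ is the triangular determinant $(-v)^{n-1}$ --- but this is exactly the verification you flag, and it goes through.
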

The proof, which uses Ackermann's formula (Lemma~\ref{lem3}) to analytically compute any stabilizing feedback gain of ($\mathbf{A},\mathbf{B}_1$) and Jury stability test (Lemma~\ref{lem4}) to verify the closed-loop stability of ($\mathbf{A},\mathbf{B}_2$) when using the stabilizing gain of the former, is given in Appendix~\ref{apxA}. 

Next, we upper bound the $\operatorname{KL}$ divergence between the probability distributions of length $N$ input-state trajectories generated by the two LTI systems defined in Proposition \ref{prop2}. Similar upper bounds of the $\operatorname{KL}$ divergence between two LTI systems can also be found in  \cite{jedra2019sample,tsiamis2021linear,tsiamis2022learning}.
\begin{proposition}
Let the systems $\mathcal{S}_1$, and $\mathcal{S}_2$ be the same as those defined in Proposition \ref{prop2}. Let $\pi$ be any learning-to-stabilize algorithm that satisfies Assumption~\ref{asm2}.
Then, the $\operatorname{KL}$ divergence between $ \mathbb{P}^N_{\mathcal{S}_{1},\pi}$ and $ \mathbb{P}^N_{\mathcal{S}_{2},\pi}$ satisfies
$$\operatorname{KL}\left(\mathbb{P}^N_{\mathcal{S}_{1},\pi}, \mathbb{P}^N_{\mathcal{S}_{2},\pi}\right) \leq \frac{N m^2 \sigma_u^2}{2\sigma_w^2 }.$$
\label{prop1}
\end{proposition}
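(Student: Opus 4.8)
The plan is to exploit that both systems share the same exploration policy $\pi$ and the same process-noise covariance $\sigma_w^2\mathbf{I}_n$, so the only source of divergence is the one-step transition kernel. First I would factor the joint density of the trajectory $(\ub_{0:N-1},\x_{1:N})$ under a system $\Sc=(\A,\B)$ via the chain rule as
$$f^N_{\Sc,\pi}(\ub_{0:N-1},\x_{1:N}) = \prod_{t=0}^{N-1}\pi_t(\ub_t\mid \ub_{0:t-1},\x_{0:t})\,p_{\Sc}(\x_{t+1}\mid \x_t,\ub_t),$$
where $p_{\Sc}(\x_{t+1}\mid \x_t,\ub_t)$ is the Gaussian density of $\mathcal{N}(\A\x_t+\B\ub_t,\,\sigma_w^2\mathbf{I}_n)$ coming from \eqref{eq1}. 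Because $\pi$ is a fixed sequence of maps of the observed data alone, the policy factors $\pi_t$ are identical for $\Sc_1$ and $\Sc_2$ and therefore cancel in the log-likelihood ratio $\log\!\big(f^N_{\Sc_1,\pi}/f^N_{\Sc_2,\pi}\big)$.

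Next I would invoke the chain rule for KL divergence (the tensorization of $\operatorname{KL}$ over the factorized densities), which after the cancellation yields
$$\operatorname{KL}\!\left(\Pc^N_{\Sc_1,\pi},\Pc^N_{\Sc_2,\pi}\right) = \mathbb{E}^N_{\Sc_1,\pi}\!\left[\sum_{t=0}^{N-1}\operatorname{KL}\!\big(p_{\Sc_1}(\cdot\mid \x_t,\ub_t),\,p_{\Sc_2}(\cdot\mid \x_t,\ub_t)\big)\right].$$
The two transition kernels are Gaussians with identical covariance $\sigma_w^2\mathbf{I}_n$ and means differing by $(\B_1-\B_2)\ub_t$. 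Since here $p=1$ and $\B_1,\B_2$ differ only in their first entry, $\B_1-\B_2=-m\,\mathbf{e}_1$, so the mean gap is $m\,u_t\,\mathbf{e}_1$. The closed form for the KL divergence between equal-covariance Gaussians then gives the per-step value $\tfrac{\|(\B_1-\B_2)\ub_t\|_2^2}{2\sigma_w^2}=\tfrac{m^2u_t^2}{2\sigma_w^2}$, which notably depends only on $u_t$ and not on $\x_t$.

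Finally I would push the expectation inside the sum, use $u_t^2=\|\ub_t\|_2^2$, and apply Assumption~\ref{asm2} to bound $\mathbb{E}^N_{\Sc_1,\pi}[\|\ub_t\|_2^2]\le\sigma_u^2$ for each $t$; summing the $N$ identical terms delivers the claimed bound $\tfrac{Nm^2\sigma_u^2}{2\sigma_w^2}$. The main obstacle is the careful justification of the chain-rule and cancellation step: I must argue that the conditional law of $\ub_t$ given the history is genuinely the same across the two systems (it is, as $\pi_t$ is a fixed function of the data that never references the true parameters), and that the relevant densities are well defined and mutually absolutely continuous so that Definition~\ref{df1} applies factor by factor. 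Everything after that is a routine equal-covariance Gaussian KL computation combined with the second-moment bound of Assumption~\ref{asm2}.
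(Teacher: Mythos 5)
Your proposal is correct and follows essentially the same route as the paper's proof: factorize the trajectory density via the chain rule, cancel the common policy factors, reduce the KL divergence to a sum of per-step divergences between the Gaussian transition kernels (whose means differ by $m\,u_t\,\mathbf{e}_1$), and bound the result using Assumption~\ref{asm2}. The only cosmetic difference is that the paper carries out the per-step computation coordinate-by-coordinate with the explicit noise terms $w^{(1)}_{t-1}$, whereas you invoke the closed-form KL formula for equal-covariance Gaussians directly; the two computations are identical in substance.
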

The proof is given in Appendix \ref{secD}.

The next theorem states that there exist some classes of systems with non-degenerate process noise, for which the worst-case sample complexity of learning to stabilize is at least exponential with the system dimension $n$.

\begin{theorem}
\label{thrm_main}
Consider $\mathcal{S}_1$ and $\mathcal{S}_2$ defined in Proposition \ref{prop2}, with $m = 2\left( \frac{2v}{r-1} \right)^n$. Consider any class $\mathcal{C}_n$ of systems  including $\mathcal{S}_1$ and $\mathcal{S}_2$, which satisfies Assumption \ref{asm1}. 
Then, for all learning-to-stabilize algorithms $\pi$ satisfying Assumption \ref{asm2} and
for all confidence levels $0 < \delta< 1/2$, the requirement
\begin{equation}  \inf_{\mathcal{S} \in \mathcal{C}_{n}} \mathbb{P}^N_{\mathcal{S},\pi}
\left( \rho \left(\mathbf{A+B} \pi_N(\ub_{0:N-1}, \x_{1:N}\right) < 1\right) \geq 1-\delta 
\label{eq56}\end{equation}
is satisfied only if
$$
N \geq \frac{ \sigma_w^2}{2 \sigma_u^2  } \left( \frac{r-1 }{2v} \right)^{2n} \log \frac{1}{3 \delta},
$$
where $n \geq 2$, $r > 1$, and $0 < v < \frac{r-1}{2} $.
\label{thm1}
\end{theorem}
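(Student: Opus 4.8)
The plan is to recast the requirement \eqref{eq56} as a two-point hypothesis testing problem between $\Sc_1$ and $\Sc_2$ and to lower bound the testing error using the $\operatorname{KL}$ bound of Proposition~\ref{prop1}. The conceptual engine is that, for the chosen $m=2\left(\tfrac{2v}{r-1}\right)^n$, the two systems are \emph{not} co-stabilizable (Definition~\ref{def3}), so the events ``$\K$ stabilizes $\Sc_1$'' and ``$\K$ stabilizes $\Sc_2$'' are disjoint on the common trajectory space; any algorithm must therefore reliably decide which system generated the data, which is impossible when the trajectory laws are statistically close.

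First I would establish non-co-stabilizability from Proposition~\ref{prop2}. Let $\mathbf{K}$ be any gain with $\rho(\A+\B_1\mathbf{K})<1$ and let $p_1^{cl},\dots,p_n^{cl}$ be the closed-loop eigenvalues, so $|p_i^{cl}|<1$. Since they occur in complex-conjugate pairs the product $\prod_{i=1}^n \tfrac{1+p_i^{cl}}{r-p_i^{cl}}$ is real, and using $|1+p_i^{cl}|<2$ together with $|r-p_i^{cl}|\geq r-|p_i^{cl}|>r-1$ I would bound
$$\prod_{i=1}^n \frac{1+p_i^{cl}}{r-p_i^{cl}} \leq \prod_{i=1}^n\frac{|1+p_i^{cl}|}{|r-p_i^{cl}|} < \left(\frac{2}{r-1}\right)^n .$$
Hence $v^n\prod_{i=1}^n \tfrac{1+p_i^{cl}}{r-p_i^{cl}} < \left(\tfrac{2v}{r-1}\right)^n < m$, so the necessary condition \eqref{prop2-(4)} fails and $\rho(\A+\B_2\mathbf{K})\geq 1$. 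Thus no single gain stabilizes both $\Sc_1$ and $\Sc_2$.

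Next I would set up the testing argument. Writing $\K=\pi_N(\ub_{0:N-1},\x_{1:N})$ and defining, on the common trajectory space, the events $E_i=\{\rho(\A+\B_i\K)<1\}$, the previous step forces $E_1\cap E_2=\emptyset$, i.e. $E_1\subseteq E_2^c$. Specializing \eqref{eq56} to $\Sc_1$ and $\Sc_2$ gives $\Pc^N_{\Sc_1,\pi}(E_1)\geq 1-\delta$ and $\Pc^N_{\Sc_2,\pi}(E_2)\geq 1-\delta$, whence $\Pc^N_{\Sc_2,\pi}(E_1)\leq \Pc^N_{\Sc_2,\pi}(E_2^c)\leq\delta$. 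Applying the data-processing inequality to the indicator of $E_1$ and the monotonicity of the binary $\operatorname{KL}$ divergence $\operatorname{kl}(\cdot,\cdot)$,
$$\operatorname{KL}\!\left(\Pc^N_{\Sc_1,\pi},\Pc^N_{\Sc_2,\pi}\right)\geq \operatorname{kl}(1-\delta,\delta)=(1-2\delta)\log\frac{1-\delta}{\delta}\geq \log\frac{1}{3\delta},$$
where the last elementary inequality holds for $0<\delta<1/2$. Combining with $\operatorname{KL}\leq \tfrac{Nm^2\sigma_u^2}{2\sigma_w^2}$ from Proposition~\ref{prop1} and substituting $m^2=4\left(\tfrac{2v}{r-1}\right)^{2n}$, I would solve the resulting inequality for $N$ to recover the claimed bound.

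The main obstacle is the first step: converting the gain-specific necessary condition \eqref{prop2-(4)} into a \emph{uniform} non-co-stabilizability statement valid for every stabilizing $\mathbf{K}$ of $\Sc_1$. This is precisely where the factor-of-two slack in $m$ is spent, and it is the step requiring care about complex closed-loop eigenvalues and the realness of the product. The remaining information-theoretic steps are standard, with the constant $3$ emerging only from the elementary lower bound on $\operatorname{kl}(1-\delta,\delta)$.
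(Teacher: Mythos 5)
Your proposal is correct and takes essentially the same route as the paper's proof: non-co-stabilizability of $\mathcal{S}_1,\mathcal{S}_2$ for the chosen $m$ via Proposition~\ref{prop2}, disjointness of the two stabilization events on trajectory space, a two-point KL lower bound (your data-processing-plus-binary-KL step is exactly Birg\'e's inequality, which the paper invokes as Lemma~\ref{lem9}), and the upper bound of Proposition~\ref{prop1} solved for $N$. If anything, your explicit estimate $v^n\prod_{i=1}^n \frac{1+p_i^{cl}}{r-p_i^{cl}} \leq v^n\prod_{i=1}^n \frac{|1+p_i^{cl}|}{|r-p_i^{cl}|} < \left(\frac{2v}{r-1}\right)^n < m$, with the conjugate-pair observation, supplies a justification that the paper states only as an unproved claim.
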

The proof of Theorem \ref{thm1} can be found in Appendix \ref{appc}. In the proof we show that if the same algorithm $\pi$ is applied to $\mathcal{S}_1$ and $\mathcal{S}_2$, for the stabilization probability in \eqref{eq56} to be high for both, exponentially many samples are needed. This indicates that for any class containing $\mathcal{S}_1$ and $\mathcal{S}_2$, polynomially many samples will not be sufficient for the satisfaction of requirement \eqref{eq56}, therefore such classes cannot be $\text{poly}(n)$-stabilizable.

Comparing the systems $\mathcal{S}_1$ and $\mathcal{S}_2$ in our proof to corresponding system pairs in \cite{tsiamis2022learning}, our pairs are individually not necessarily ``hard to identify'' but the distance $m$ in the parameter space between the pairs shrinks exponentially fast as we increase $n$. As shown in Proposition~\ref{prop1}, the input-state trajectory distributions our pairs of systems generate look very similar; this is expected since the system parameters get closer with $n$. In general, one may expect if two systems are close to each other in the parameter space, they can be co-stabilized by the same controller $\mathbf{K}$. However, our pairs cannot be co-stabilized (as shown in Proposition~\ref{prop2}) with a single gain $\mathbf{K}$ although the systems are very close in parameter space, which is the main source of hardness.

\begin{remark}
Our proof technique can also be extended to show the hardness of learning to stabilize for classes of systems containing single-input systems with diagonal state matrices and $n$ unstable eigenvalues in a compact range, presented in \cite{li2022fundamental}. In that case, when the input vector is the all-one vector, the controllability matrix of the system is a Vandermonde matrix, which allows us to again use Ackermann's formula to obtain the explicit form of all stabilizing linear state-feedback gains. Results similar to Proposition \ref{prop2} and Theorem \ref{thm1} can be established in this case too.
\end{remark}

\section{Numerical Experiments}
 \label{sec5}
 In this section, we implement two numerical experiments, i.e., certainty equivalent linear quadratic regulator (LQR) and robust control, to show the hardness of stabilization.
\subsection{Certainty Equivalent LQR}
{Since solving LQR problems always gives stabilizing controllers (under mild regularity conditions), the first experiment considers the certainty equivalent LQR control \cite{mania2019certainty}. Specifically, a controller is computed by solving an LQR problem using some estimated system dynamics and then applied to the ground truth system.}
The infinite-horizon LQR problem, simplified as $dLQR(\mathbf{A},\mathbf{B},\mathbf{Q},\mathbf{R})$, is as follows.
\begin{equation}
\begin{aligned}
&\min_{\mathbf{u}_0,\mathbf{u}_1,\cdots}\lim _{T \rightarrow \infty} \mathbb{E} \left[   \frac{1}{T} \sum_{t=0}^{T}\left(\mathbf{x}_{t}^{\top} \mathbf{Q} \mathbf{x}_{t}+ \mathbf{u}_{t}^{\top} \mathbf{R} \mathbf{u}_{t}\right) \right]\\
& \begin{array}{r@{\quad}r@{}l@{\quad}l}
\text{s.t.} & (\ref{eq1}) \\
\end{array} 
\end{aligned} 
\end{equation}
where $\mathbf{Q}, \mathbf{R}$ are positive semi-definite cost matrices.  Its solution is given by $\mathbf{u}_t = \mathbf{K} \mathbf{x}_t$ where the controller $\mathbf{K}$ can be computed by solving the Riccati equation. 
Consider the system $(\mathbf{A},\mathbf{B}_1)$ defined in Proposition \ref{prop2}, and let $\mathbf{Q}=\mathbf{I}_n$, and $\mathbf{R}=1$. 
{Since the analysis of Theorem~\ref{thrm_main} is established on perturbing $b^{(1)}$ in $\mathbf{B}_1$, we consider a simplified setting where only $b^{(1)}$ is unknown and to be estimated using the least squares estimator, which is denoted by $\hat{b}^{(1)}$. Let $\hat{\mathbf{B}}_1$ denote the matrix by replacing $b^{(1)}$ with $\hat{b}^{(1)}$ and $\hat{\mathbf{K}}$ denote the certainty equivalent controller for $(\mathbf{A},\mathbf{B}_1)$ obtained by solving $dLQR(\mathbf{A},\hat{\mathbf{B}}_1,\mathbf{Q},\mathbf{R})$. We let the input $u_t \overset{i.i.d.}{\sim} \mathcal{N}(0, \sigma_u^2)$. Since there is only a single unknown parameter and its regressor $u_t$ is independent, this system is trivially easy to identify.

For each dimension $n$, we run $M = 200$ independent experiments. Let $\hat{\mathbf{K}}_{i,N'}$ denote the controller obtained using the first $N'$ data points, i.e., $\{ \mathbf{u}_{0:N'-1}, \mathbf{x}_{1:N'} \}$, in the $i^{th}$ experiment. We record the smallest trajectory length $N$ under which at least 90\% of the experiments produce stabilizing controllers, i.e.
\begin{equation}
    N : = \min \Big\{ N' \in \mathbb{N} : \frac{1}{M} \sum_{i \in [M]} \mathbb{I}_{ \{\rho(\mathbf{A}+\mathbf{B}_1 \hat{\mathbf{K}}_{i,N'})<1 \} } \geq 0.9 \Big\},
\end{equation}
}
where $\mathbb{I}$ denotes the indicator function.

The results are given in Fig.~\ref{fig1}. According to Fig.~\ref{fig1}, we have that as the system dimension increases, the required number of samples for a given frequency of stability increases exponentially with the system dimension. 

\begin{figure}
  \centering
  \includegraphics[scale=0.38]{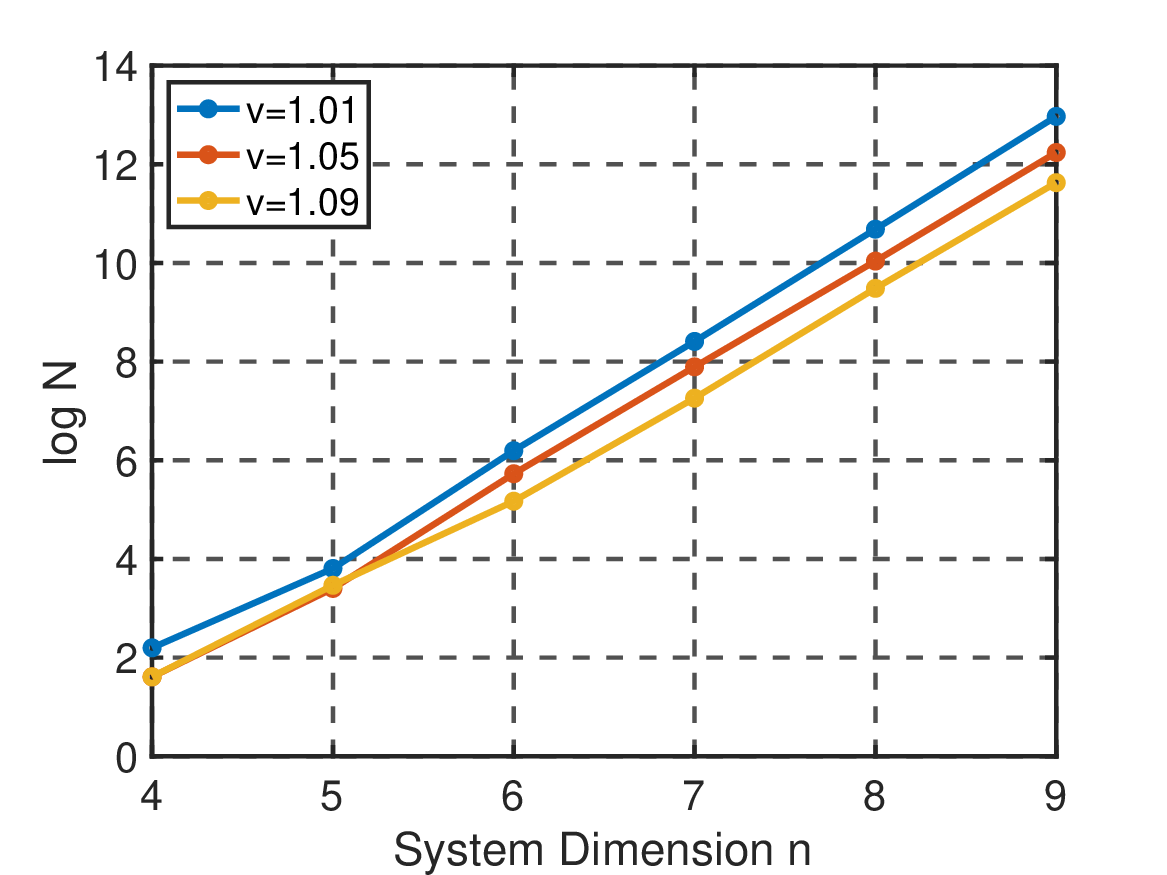}
  \caption{{Required trajectory length $N$ for $90\%$ stabilization rate vs. system dimension $n$: $\sigma_u^2 = 32$, $\sigma_w^2 = 0.005$, $v \in \{1.01, 1.05, 1.09\}$}}
  \label{fig1}
\vspace{-0.3cm}
\end{figure}

\subsection{LMI-based Sufficient Condition for Co-stabilizability}

 In this section, we numerically demonstrate the hardness of co-stabilizability of $\mathcal{S}_1=(\mathbf{A},\mathbf{B}_1)$ and $\mathcal{S}_2=(\mathbf{A},\mathbf{B}_2(m))$ using ideas from robust control, where we leave $m$ as a parameter. We use the following feasibility problem, which can be converted to an LMI, to check sufficient conditions of co-stabilizability.
\begin{equation}
\begin{array}{cl}
\text { find }  & \mathbf{K, P} \\
\text {s.t.} & \left( \mathbf{A} + \mathbf{B}_1 \mathbf{K} \right)^{\top}  \mathbf{P} \left( \mathbf{A} + \mathbf{B}_1 \mathbf{K} \right) \prec \mathbf{P}\\
&\left( \mathbf{A} + \mathbf{B}_2(m) \mathbf{K} \right)^{\top}  \mathbf{P} \left( \mathbf{A} + \mathbf{B}_2(m)\mathbf{K} \right) \prec \mathbf{P}\\
& \mathbf{P} \succ 0 \\
\end{array}.
\label{eq13}
\end{equation}
We use the bisection method to find the largest $m$ such that the problem (\ref{eq13}) is feasible. The results are shown in Fig.~\ref{fig2}. According to this figure, we see that as the system dimension increases, the largest $m$ such that the LMI optimization problem in (\ref{eq13}) is feasible decreases exponentially with increasing system dimension, which is consistent with Eq.~\eqref{prop2-(4)} in Proposition \ref{prop2}. 

\begin{figure}
  \centering
  \includegraphics[scale=0.38]{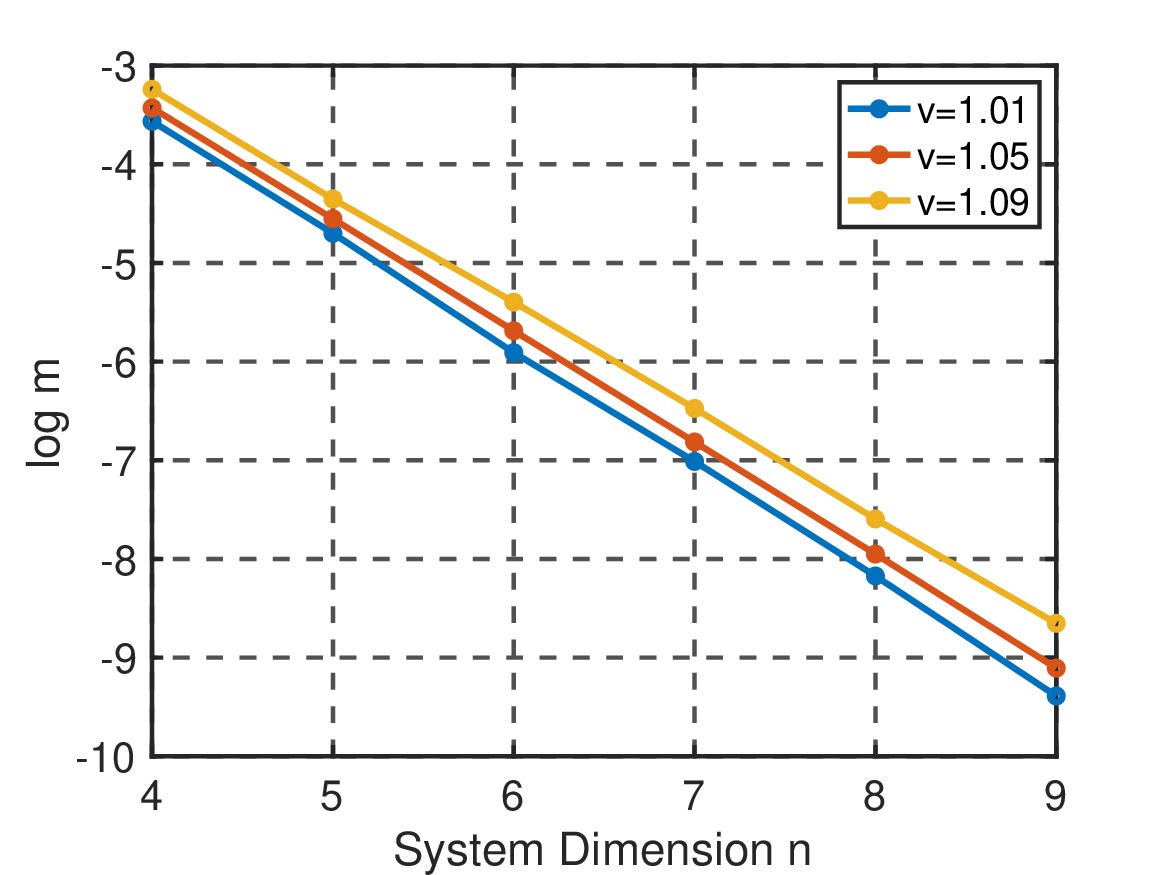}
  \caption{$v=1.01,1.05,1.09$, and $r = 3.2$. The $x$-axis is the system dimension $n$ and the $y$-axis is the logarithm of the largest $m$ such that the problem in (\ref{eq13}) is feasible.}
  \label{fig2}
\vspace{-0.3cm}
\end{figure}

\section{Conclusion and Future Work}
In this work, we identified an extended class of LTI systems that are hard to learn to stabilize with static state feedback. The main idea in constructing such examples is to find pairs of systems whose parameters become exponentially close to each other as the dimension increases, yet they are not co-stabilizable. One interesting observation is that the entries of stabilizing gains for these pairs are also growing exponentially (see, Eq.~\eqref{eq36}). In the future, we want to investigate the ramifications of this observation in gradient-based learning algorithms used for control as in \cite{ziemann2022policy}.  

\noindent\emph{Acknowledgments: }The authors would like to thank Prof. Peter Seiler of University of Michigan for some early discussions that motivated this work.

  \appendix
  
\subsection{Proof of Proposition \ref{prop2}}
\label{apxA}
We first introduce a few lemmas used in the proof of Proposition \ref{prop2}. The first lemma parameterizes all stabilizing state-feedback gains for single-input controllable LTI systems. Recall that the controllability matrix of a system $(\A, \B)$ is defined by
\begin{equation}
\mathbf{Ctr}_{(\mathbf{A},\mathbf{B} )}=\left[\begin{array}{llll}
\mathbf{B} & \mathbf{A} \mathbf{B} & \cdots & \mathbf{A}^{n-1} \mathbf{B}
\end{array}\right].
\label{eq26}
\end{equation}

\begin{lemma}[Ackermann's formula \cite{ackermann1972entwurf}
]
 Consider the following order $n$ single-input controllable system $(\mathbf{A},\mathbf{B})$ with state feedback $\mathbf{K} \in \mathbb{R}^{1\times n}$:
$$\left\{ \begin{aligned}
      \mathbf{x}_{t+1} & =\mathbf{A} \mathbf{x}_t + \mathbf{B} \mathbf{u}_t \\  
      \mathbf{u}_t & = \mathbf{K} \mathbf{x}_t
\end{aligned}. \right.
$$
Given $n$ desired eigenvalues of $\mathbf{A}+\mathbf{B}\mathbf{K}$, the unique state feedback that achieves these closed-loop eigenvalues is:
\begin{equation}
\mathbf{K} = -\mathbf{e}_n^{\mathrm{\top}} \mathbf{Ctr}^{-1}_{(\mathbf{A},\mathbf{B})} \Delta^{\mathrm{cl}}(\mathbf{A}),
\label{eq28}
\end{equation}
where $\mathbf{e}_n$ is the last column of the $n \times n$ identity matrix, and $\Delta^{\mathrm{cl}}(\mathbf{A})$ is the characteristic polynomial of $\mathbf{A+BK}$ evaluated at $\mathbf{A}$.
     \label{lem3}
 \end{lemma}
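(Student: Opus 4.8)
The plan is to prove the formula by exploiting the single-input structure through the last row of the inverse controllability matrix, together with the Cayley--Hamilton theorem applied to the closed-loop matrix. Since $(\mathbf{A},\mathbf{B})$ is controllable, $\mathbf{Ctr}_{(\mathbf{A},\mathbf{B})}$ is invertible; I would define the row vector $\mathbf{q}^{\top} := \mathbf{e}_n^{\top}\,\mathbf{Ctr}_{(\mathbf{A},\mathbf{B})}^{-1}$. Reading the identity $\mathbf{q}^{\top}\mathbf{Ctr}_{(\mathbf{A},\mathbf{B})} = \mathbf{e}_n^{\top}$ column by column yields the only structural facts needed,
\begin{equation}
\mathbf{q}^{\top}\mathbf{A}^{j}\mathbf{B} = 0 \quad (j=0,\dots,n-2), \qquad \mathbf{q}^{\top}\mathbf{A}^{n-1}\mathbf{B} = 1 .
\end{equation}

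The heart of the argument is an induction showing that $\mathbf{q}^{\top}$ cannot distinguish powers of the closed-loop matrix from powers of $\mathbf{A}$ up to order $n-1$. Concretely, I would prove $\mathbf{q}^{\top}(\mathbf{A}+\mathbf{B}\mathbf{K})^{k} = \mathbf{q}^{\top}\mathbf{A}^{k}$ for $k=0,\dots,n-1$. The base case $k=0$ is immediate, and for the inductive step one writes $\mathbf{q}^{\top}(\mathbf{A}+\mathbf{B}\mathbf{K})^{k+1} = \mathbf{q}^{\top}\mathbf{A}^{k}(\mathbf{A}+\mathbf{B}\mathbf{K}) = \mathbf{q}^{\top}\mathbf{A}^{k+1} + (\mathbf{q}^{\top}\mathbf{A}^{k}\mathbf{B})\mathbf{K}$, where the last term vanishes whenever $k\le n-2$ by the orthogonality relations. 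At order $n$ the same computation instead gives $\mathbf{q}^{\top}(\mathbf{A}+\mathbf{B}\mathbf{K})^{n} = \mathbf{q}^{\top}\mathbf{A}^{n} + (\mathbf{q}^{\top}\mathbf{A}^{n-1}\mathbf{B})\mathbf{K} = \mathbf{q}^{\top}\mathbf{A}^{n} + \mathbf{K}$, since now the scalar coefficient equals $1$.

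To close the argument I would apply Cayley--Hamilton to the closed loop: any $\mathbf{K}$ placing the eigenvalues so that $\mathbf{A}+\mathbf{B}\mathbf{K}$ has characteristic polynomial $\Delta^{\mathrm{cl}}(s) = s^{n} + \alpha_{n-1}s^{n-1} + \cdots + \alpha_0$ satisfies $\Delta^{\mathrm{cl}}(\mathbf{A}+\mathbf{B}\mathbf{K}) = \mathbf{0}$. Left-multiplying this matrix identity by $\mathbf{q}^{\top}$ and substituting the relations from the induction collapses every term of order $\le n-1$ back to the corresponding power of $\mathbf{A}$, while the leading term contributes the extra $\mathbf{K}$. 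The result is $\mathbf{K} + \mathbf{q}^{\top}\Delta^{\mathrm{cl}}(\mathbf{A}) = \mathbf{0}$, i.e. $\mathbf{K} = -\mathbf{e}_n^{\top}\mathbf{Ctr}_{(\mathbf{A},\mathbf{B})}^{-1}\Delta^{\mathrm{cl}}(\mathbf{A})$, which is exactly \eqref{eq28}. Because this computation determines $\mathbf{K}$ uniquely from $\Delta^{\mathrm{cl}}$, it simultaneously establishes uniqueness; existence of some gain realizing the prescribed eigenvalues is the standard pole-placement consequence of controllability, so the formula indeed produces the (unique) such gain.

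I expect the main obstacle to be stating the inductive step cleanly enough that the role of the single-input assumption is transparent: namely, that $\mathbf{q}^{\top}\mathbf{A}^{k}\mathbf{B}$ is a scalar, so the stray term is literally (scalar)$\cdot\mathbf{K}$ rather than a matrix product that would fail to separate out $\mathbf{K}$. The orthogonality relations and the applicability of Cayley--Hamilton to the closed loop are routine; the one place requiring care is verifying that no higher-order cross terms survive before order $n$ (equivalently, that the leftmost occurrence of $\mathbf{B}$ in any monomial of $(\mathbf{A}+\mathbf{B}\mathbf{K})^{k}$ is annihilated by $\mathbf{q}^{\top}$ for $k<n$), and the induction handles this automatically without an explicit combinatorial expansion of the power.
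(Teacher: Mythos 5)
Your proposal is correct, but note that there is no in-paper proof to compare it against: the paper states this lemma as a classical result, imported with the citation to Ackermann's 1972 paper, and uses it as a black box in the proof of Lemma~\ref{lem7} (where only the structural consequence \eqref{eq33}, the last row of $\mathbf{Ctr}^{-1}_{(\mathbf{A},\mathbf{B}_1)}$, is computed for the specific pair at hand). Your derivation is the standard textbook argument, and it checks out: with $\mathbf{q}^{\top}=\mathbf{e}_n^{\top}\mathbf{Ctr}^{-1}_{(\mathbf{A},\mathbf{B})}$, the relations $\mathbf{q}^{\top}\mathbf{A}^{j}\mathbf{B}=0$ for $j\leq n-2$ and $\mathbf{q}^{\top}\mathbf{A}^{n-1}\mathbf{B}=1$ follow from reading $\mathbf{q}^{\top}\mathbf{Ctr}_{(\mathbf{A},\mathbf{B})}=\mathbf{e}_n^{\top}$ columnwise; the induction $\mathbf{q}^{\top}(\mathbf{A}+\mathbf{B}\mathbf{K})^{k}=\mathbf{q}^{\top}\mathbf{A}^{k}$ for $k\leq n-1$ is airtight because the stray term $(\mathbf{q}^{\top}\mathbf{A}^{k}\mathbf{B})\mathbf{K}$ carries a scalar coefficient (this is exactly where single-input enters, as you correctly flag); and left-multiplying the Cayley--Hamilton identity $\Delta^{\mathrm{cl}}(\mathbf{A}+\mathbf{B}\mathbf{K})=\mathbf{0}$ by $\mathbf{q}^{\top}$ collapses to $\mathbf{K}+\mathbf{q}^{\top}\Delta^{\mathrm{cl}}(\mathbf{A})=\mathbf{0}$, which is \eqref{eq28}. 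One caveat worth making explicit: your computation proves \emph{necessity} (any gain realizing the prescribed eigenvalues must equal the formula), which gives uniqueness, but \emph{existence} is delegated to the classical pole-placement theorem for controllable single-input pairs. To be fully self-contained you would either prove that theorem (e.g., via the controllable canonical form) or verify directly that the gain $-\mathbf{q}^{\top}\Delta^{\mathrm{cl}}(\mathbf{A})$ yields closed-loop characteristic polynomial $\Delta^{\mathrm{cl}}$; since the lemma is itself a cited classical tool, that reliance is reasonable, but the statement as written asserts both existence and uniqueness, so the dependence should be acknowledged, as you do.
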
 
The next lemma derives the expression of the first element of any stabilizing state-feedback gains for $(\mathbf{A},\mathbf{B}_1)$, parameterized by the stable closed-loop poles. 
\begin{lemma}
 For the system $(\mathbf{A},  \mathbf{B}_1 )$ defined in Proposition \ref{prop2} and any stabilizing state feedback $\mathbf{K}\in \R^{1\times n}$, let $\{   p_{k}^{cl}\}_{k=1}^n$ be the eigenvalues of $\mathbf{A}  + \mathbf{B}_1 \mathbf{K}$, with $\{   p_{k}^{cl}\}_{k=1}^n$ all inside the unit circle.
 Then, the first element $k_1$ of  the state feedback $\mathbf{K}$ satisfies
 \begin{equation}
k_1 =  -\frac{\left(r -p_{1}^{cl} \right)\left(r -p_{2}^{cl} \right)\cdots \left(r -p_{n}^{cl} \right)}{v^n}.
     \label{eq39}
 \end{equation}
 
     \label{lem7}
 \end{lemma}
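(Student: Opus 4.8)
The plan is to invoke Ackermann's formula (Lemma~\ref{lem3}) and exploit the sparse, upper-triangular structure of $\mathbf{A}$ together with $\mathbf{B}_1 = v\mathbf{e}_n$. Since a single-input controllable system admits a \emph{unique} state feedback achieving any prescribed set of closed-loop eigenvalues, the gain $\mathbf{K}$ in the statement must coincide with the Ackermann gain \eqref{eq28} for the poles $\{p_k^{\mathrm{cl}}\}_{k=1}^n$; it therefore suffices to read off the first entry of that expression.

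The first step is to build the controllability matrix explicitly. From the columnwise action $\mathbf{A}\mathbf{e}_1 = r\mathbf{e}_1$ and $\mathbf{A}\mathbf{e}_j = v\mathbf{e}_{j-1}$ for $2\le j\le n$, an easy induction gives $\mathbf{A}^k\mathbf{B}_1 = v^{k+1}\mathbf{e}_{n-k}$ for $0\le k\le n-1$. Hence $\mathbf{Ctr}_{(\mathbf{A},\mathbf{B}_1)}$ is the anti-diagonal matrix whose $j$-th column is $v^j\mathbf{e}_{n-j+1}$; in particular it is invertible (as $v>0$, so $(\mathbf{A},\mathbf{B}_1)$ is controllable and Ackermann applies). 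Because an anti-diagonal matrix has an anti-diagonal inverse, the bottom row collapses to a single term, $\mathbf{e}_n^\top\mathbf{Ctr}^{-1}_{(\mathbf{A},\mathbf{B}_1)} = v^{-n}\mathbf{e}_1^\top$. Substituting into \eqref{eq28} yields $\mathbf{K} = -v^{-n}\mathbf{e}_1^\top\Delta^{\mathrm{cl}}(\mathbf{A})$, so its first element is simply $k_1 = -v^{-n}\,[\Delta^{\mathrm{cl}}(\mathbf{A})]^{(1,1)}$, where $\Delta^{\mathrm{cl}}(\mathbf{A}) = \prod_{k=1}^n(\mathbf{A}-p_k^{\mathrm{cl}}\mathbf{I}_n)$.

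The final step is to evaluate that one $(1,1)$ entry. Here the upper-triangularity of $\mathbf{A}$ (whose diagonal is $(r,0,\dots,0)$) does all the work: each factor $\mathbf{A}-p_k^{\mathrm{cl}}\mathbf{I}_n$ is upper triangular with $(1,1)$ entry $r-p_k^{\mathrm{cl}}$, and the diagonal of a product of upper-triangular matrices is the entrywise product of the diagonals, so $[\Delta^{\mathrm{cl}}(\mathbf{A})]^{(1,1)} = \prod_{k=1}^n(r-p_k^{\mathrm{cl}})$. Combining this with the previous expression for $k_1$ gives exactly \eqref{eq39}. The computation is essentially self-executing once the structure is spotted; the only genuine conceptual obstacle is recognizing the two facts that keep it clean, namely that $\mathbf{e}_n^\top\mathbf{Ctr}^{-1}_{(\mathbf{A},\mathbf{B}_1)}$ collapses to a multiple of $\mathbf{e}_1^\top$ (so that only the $(1,1)$ entry of $\Delta^{\mathrm{cl}}(\mathbf{A})$ is ever needed) and that upper-triangularity factors that single entry into $\prod_k(r-p_k^{\mathrm{cl}})$. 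Without the latter one would be forced to expand the full matrix polynomial, and the resulting expression would not factor so transparently.
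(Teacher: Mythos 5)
Your proof is correct and follows essentially the same route as the paper's: Ackermann's formula, the observation that $\mathbf{e}_n^\top\mathbf{Ctr}^{-1}_{(\mathbf{A},\mathbf{B}_1)} = v^{-n}\mathbf{e}_1^\top$, and the evaluation of $[\Delta^{\mathrm{cl}}(\mathbf{A})]^{(1,1)} = \prod_{k=1}^n (r-p_k^{\mathrm{cl}})$ via upper-triangularity. The only difference is that you explicitly derive the facts the paper merely asserts (the anti-diagonal form $\mathbf{A}^k\mathbf{B}_1 = v^{k+1}\mathbf{e}_{n-k}$, hence controllability, and the structure of the inverse controllability matrix), which makes your write-up more self-contained.
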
 
\begin{proof}
By Lemma \ref{lem3}, since $(\mathbf{A}, \mathbf{B}_1)$ is single-input and controllable, the state feedback $\mathbf{K}$ satisfies 
\begin{equation}
\mathbf{K}= -\mathbf{e}_n^{\mathrm{\top}} \mathbf{Ctr}_{(\mathbf{A}, \mathbf{B}_1)}^{-1} \Delta^{\mathrm{cl}}(\mathbf{A}),
\label{eq48}
\end{equation}
where the characteristic polynomial $\Delta^{\mathrm{cl}}(\A)$ of the closed-loop system $\mathbf{A}+\mathbf{B}_1 \mathbf{K}$ evaluated at $\A$ is
\begin{equation}
\begin{aligned}
\Delta^{\mathrm{cl}}(\mathbf{A}) =\prod_{i=1}^{n} (\mathbf{A} -p_{i}^{cl} \mathbf{I}_n ).
\end{aligned}
\label{eq34}
\end{equation}
By the definition of $\mathbf{A}$ in \eqref{eq2}, for all $i=1$, $2$, ..., $n$,
\begin{equation}
\mathbf{A}-p_{i}^{cl} \mathbf{I}_n =  \left[\begin{array}{ccccc}
r -p_{i}^{cl} & v & 0 & \cdots & 0 \\
0 & -p_{i}^{cl} & v & \cdots & 0 \\
& & \ddots & \ddots & \\
0 & 0 & 0 & \cdots & v\\
0 & 0 & 0 & \cdots & -p_{i}^{cl}
\end{array}\right].
    \label{eq51}
\end{equation}
Based on \eqref{eq51}, the element of $\Delta^{\mathrm{cl}}(\mathbf{A})$ at the first row and the first column is
\begin{equation} \left[ \Delta^{\mathrm{cl}}(\mathbf{A}) \right]^{(1,1)}  = \left(r -p_{1}^{cl} \right)\left(r -p_{2}^{cl} \right)\cdots \left(r -p_{n}^{cl} \right).
\label{eq35}
\end{equation}
Furthermore, due to the special structures of  $(\mathbf{A},  \mathbf{B}_1 )$, it can be shown that the last row of the inverse of the controllability matrix $\mathbf{Ctr}_{(\mathbf{A},\mathbf{B}_1)}$ is
\begin{equation}\mathbf{e}_n^{\mathrm{\top}} \mathbf{Ctr}^{-1}_{(\mathbf{A},\mathbf{B}_1)} =\left[\begin{array}{cccc}   
    v^{-n} & 0 & \cdots & 0 \\  
  \end{array}\right].
  \label{eq33}
  \end{equation}
Thus, according to (\ref{eq48}), \eqref{eq35}, and \eqref{eq33},  the first element of the state feedback $\mathbf{K}$ is
\begin{equation}
\begin{aligned}
k_1 &= -v^{-n}  \left[ \Delta^{\mathrm{cl}}(\mathbf{A}) \right]^{(1,1)} \\
&=  -\frac{\left(r -p_{1}^{cl} \right)\left(r -p_{2}^{cl} \right)\cdots \left(r -p_{n}^{cl} \right)}{v^n}.
\end{aligned}
\label{eq36}
\end{equation}
\end{proof}

The next lemma provides a necessary condition for the stability of discrete-time LTI systems.
 \begin{lemma}[Jury stability test, Theorem 4.6 in \cite{fadali2012digital}]
 For the polynomial
$$
\Delta(z)=a_n z^n+a_{n-1} z^{n-1}+\ldots+a_1 z+a_0=0, 
$$
with $a_n>0$, the roots of the polynomial are inside the unit circle only if
\begin{equation}
    \begin{cases}
       \Delta(1)>0 \\
   (-1)^n \Delta(-1)>0 
\end{cases}.
\label{eq27}
\end{equation}
     \label{lem4}
 \end{lemma}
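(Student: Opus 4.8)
The plan is to establish the stated conditions as \emph{necessary} by assuming every root of $\Delta$ lies strictly inside the unit circle and then deducing the two sign inequalities directly. Since $a_n > 0$, I would begin by factoring the polynomial over $\mathbb{C}$ as $\Delta(z) = a_n \prod_{i=1}^n (z - p_i)$, where $p_1, \dots, p_n$ are the roots counted with multiplicity, each satisfying $|p_i| < 1$. Because the coefficients $a_0, \dots, a_n$ are real, the non-real roots occur in complex-conjugate pairs, a fact I will exploit below.

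Next I would evaluate the factored form at $z = 1$ and $z = -1$. This gives $\Delta(1) = a_n \prod_{i=1}^n (1 - p_i)$ and $\Delta(-1) = a_n \prod_{i=1}^n (-1 - p_i) = a_n (-1)^n \prod_{i=1}^n (1 + p_i)$, so that $(-1)^n \Delta(-1) = a_n \prod_{i=1}^n (1 + p_i)$. Since $a_n > 0$, the two desired inequalities $\Delta(1) > 0$ and $(-1)^n \Delta(-1) > 0$ reduce to showing that each of the products $\prod_i (1 - p_i)$ and $\prod_i (1 + p_i)$ is strictly positive.

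The heart of the argument is to verify positivity factor-by-factor after grouping the roots. For a real root $p_i \in (-1, 1)$, both $1 - p_i > 0$ and $1 + p_i > 0$ hold immediately. For a conjugate pair $p, \bar p$ with $|p| < 1$, I would combine the two corresponding factors into $(1 - p)(1 - \bar p) = |1 - p|^2$ and $(1 + p)(1 + \bar p) = |1 + p|^2$. Both are strictly positive because $|p| < 1$ forces $p \neq 1$ and $p \neq -1$. Multiplying these positive contributions over all real roots and all conjugate pairs shows that both products are strictly positive, which completes the deduction.

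The only genuine subtlety—and thus the step I would treat most carefully—is the bookkeeping for the complex roots: one must pair each non-real root with its conjugate so that the product of the two linear factors collapses to a real, manifestly positive quantity $|1 \mp p|^2$. Once this pairing is in place the remaining reasoning is elementary arithmetic, and the strict bound $|p_i| < 1$ (rather than $|p_i| \le 1$) is precisely what rules out the degenerate cases $p_i = \pm 1$ that would cause a factor to vanish and the inequalities to fail.
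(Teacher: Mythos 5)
Your proof is correct, but note that the paper does not actually prove this lemma at all: it is imported verbatim as Theorem 4.6 of the cited textbook \cite{fadali2012digital}, so there is no in-paper argument to compare against. Your blind reconstruction is the standard and complete way to establish necessity: factor $\Delta(z)=a_n\prod_{i=1}^n(z-p_i)$ with $|p_i|<1$, evaluate at $z=\pm 1$, observe $\Delta(1)=a_n\prod_i(1-p_i)$ and $(-1)^n\Delta(-1)=a_n\prod_i(1+p_i)$, and get strict positivity by handling real roots directly and collapsing each conjugate pair into $|1\mp p|^2>0$. The conjugate-pairing step you flag as the subtle point is indeed the one place the argument needs care, and it silently uses that the coefficients $a_0,\dots,a_n$ are real --- an assumption not written in the lemma statement but implicit in the Jury test and satisfied in the paper's application, where $\Delta$ is the characteristic polynomial of a real matrix; you should state it explicitly. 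As a side remark, an alternative route avoiding the pairing bookkeeping is to argue by sign and continuity: since the coefficients are real, $\Delta(1)$ and $\Delta(-1)$ are real and nonzero (no root lies on the unit circle), $\Delta(x)\to+\infty$ as $x\to+\infty$ with $a_n>0$, and $\Delta$ has no real root in $[1,\infty)$, so $\Delta(1)>0$; the condition at $-1$ follows symmetrically from $(-1)^n\Delta(-x)$ having positive leading coefficient and no real root in $[1,\infty)$. Either way, your proof is sound and self-contained, which is more than the paper provides.
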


 Now, we are ready to present the proof of Proposition \ref{prop2}.
 
\begin{proof}
 Consider the two systems $(\mathbf{A},\mathbf{B}_1)$ and $(\mathbf{A},\mathbf{B}_2)$ in Proposition \ref{prop2}. 
 
Let $\mathbf{K}$ be any stabilizing state-feedback gain of $(\A, \B_1)$.
By Lemma \ref{lem7},
 the first element of $\mathbf{K}$  satisfies
 \begin{equation}
k_1 =  -\frac{\left(r -p_{1}^{cl} \right)\left(r -p_{2}^{cl} \right)\cdots \left(r -p_{n}^{cl} \right)}{v^n},
     \label{eq53}
 \end{equation}
 where $p_{1}^{cl}$, $p_{2}^{cl}$, ..., $p_{n}^{cl}$ are the eigenvalues of $\mathbf{A}  + \mathbf{B}_1 \mathbf{K}$ with $|p_{1}^{cl}|,|p_{2}^{cl}|,\dots,|p_{n}^{cl}|<1$.
Next, it can be shown that the characteristic polynomial $\Delta_n^{\mathrm{cl}}(z)$ of $\mathbf{A} + \mathbf{B}_1 \mathbf{K}$ is 
\begin{equation}
\begin{aligned}
\Delta_n^{\mathrm{cl}}(z) :=& \det(z \mathbf{I-\mathbf{A} - \mathbf{B}_1   \mathbf{K}})\\
 = & z^n- \left(r + v k_n \right)z^{n-1} \\
 &+  \sum_{j=0}^{n-2} v^{n-j-1} \left(r k_{j+2} - v k_{j+1} \right) z^{j} 
\end{aligned}.
\label{eq54}
\end{equation}
Similarly, one can show that the characteristic polynomial $\hat{\Delta}_n^{\mathrm{cl}}(z)$ of $\mathbf{ A} + \mathbf{B}_2  \mathbf{K}$ satisfies
\begin{equation}
\begin{aligned}
\hat{\Delta}_n^{\mathrm{cl}}(z) &:= \det(z \mathbf{I-\mathbf{ A} - \mathbf{B}_2  \mathbf{K}   })\\
=& z^n-(r + v k_n + k_1 m)z^{n-1} + \\ & \sum_{j=0}^{n-2} v^{n-j-1} (r k_{j+2} - v k_{j+1}) z^{j} \\
&= \Delta_n^{\mathrm{cl}}(z) - mk_1 z^{n-1}.
\end{aligned}
\label{eq70}
\end{equation}
Thus, by \eqref{eq70}, we have
\begin{equation}
\begin{cases}
    \hat{\Delta}_n^{\mathrm{cl}}(1) =\Delta_n^{\mathrm{cl}}(1) - mk_1, \\
    \hat{\Delta}_n^{\mathrm{cl}}(-1) =\Delta_n^{\mathrm{cl}}(-1) - (-1)^{n-1}mk_1.
    \end{cases}
    \label{eq55}
\end{equation}
By Lemma \ref{lem4}, the matrix $\mathbf{ A} + \mathbf{B}_2  \mathbf{K}$ is stable only if
\begin{equation}\label{eq23}
\begin{cases}
    \hat{\Delta}_n^{\mathrm{cl}}(1) > 0\\
 (-1)^{n}   \hat{\Delta}_n^{\mathrm{cl}}(-1) > 0
    \end{cases}.
\end{equation}
Also, note that
\begin{equation}
\begin{cases}
\Delta_n^{\mathrm{cl}}(1) = \prod_{i=1}^{n}(1-p_i^{cl})\\
\Delta_n^{\mathrm{cl}}(-1) = (-1)^n \prod_{i=1}^{n}(1+p_i^{cl})
\end{cases}.
\label{eq52}
\end{equation}
Combining (\ref{eq55}), (\ref{eq23}), and (\ref{eq52}), we have that
$\mathbf{ A} + \mathbf{B}_2  \mathbf{K}$ is stable only if
\begin{equation}
    0 \leq m < v^n \prod_{i=1}^{n} \frac{1+p_i^{cl}}{r -p_{i}^{cl}},
\end{equation}
with $|p_{1}^{cl}|,|p_{2}^{cl}|,\dots,|p_{n}^{cl}|<1$.
\end{proof}

\subsection{Proof of Proposition \ref{prop1}}
\label{secD}

For simplicity of notation, we denote  $\mathbb{P}_{\mathcal{S}_i,\pi}^t$, $f_{\mathcal{S}_i,\pi}^t$, and $\mathbb{E}_{\mathcal{S}_1,\pi}^t$ by $\mathbb{P}^t_i$,  $f^t_i$, and  $\mathbb{E}^t_1$ respectively, for $i=1,2$, and $0\leq t\leq N$. With this notation,
Proposition \ref{prop1} can be proven as follows.
\begin{proof} Starting with the definition of KL divergence (i.e., Definition \ref{df1}), we have
{\small
\begin{align}\label{eq69}
\begin{split}
\operatorname{KL}\left(\mathbb{P}^N_1, \mathbb{P}^N_2\right)&  {=} \mathbb{E}^N_{1} \left[ \log \frac{ f^N_1 \left( \mathbf{u}_{0:N-1},\mathbf{x}_{0:N} \right) }{ f^N_2 \left( \mathbf{u}_{0:N-1},\mathbf{x}_{0:N} \right) } \right] 
\\
&=\mathbb{E}^N_{1} \left[ \log \frac{ \prod_{t=0}^{N} f^t_1 \left( \mathbf{x}_t \mid \x_{0: t-1}, \mathbf{u}_{0: t-1} \right)  }{ \prod_{t=0}^{N} f^t_2 \left( \mathbf{x}_t \mid \x_{0: t-1}, \mathbf{u}_{0: t-1} \right)   } \right]\\
&\quad +\mathbb{E}^N_{1} \left[ \log \frac{   \prod_{t=0}^{N-1} f^t_1 \left( \mathbf{u}_t \mid \x_{0: t}, \mathbf{u}_{0: t-1} \right) }{   \prod_{t=0}^{N-1} f^t_2 \left( \mathbf{u}_t \mid \x_{0: t}, \mathbf{u}_{0: t-1} \right) } \right]
\\
&=  \sum_{t=0}^N \mathbb{E}^t_{1} \left[ \log \frac{   f^t_1 \left( \mathbf{x}_t \mid \x_{  t-1}, \mathbf{u}_{  t-1} \right)  }{   f^t_2 \left( \mathbf{x}_t \mid \x_{  t-1}, \mathbf{u}_{  t-1} \right)   } \right],
\end{split}
\end{align}
}\noindent where the second equality is from the properties of the conditional probability density functions and the third equality is because the exploration policies of these two systems are the same and the discrete-time LTI system has the Markovian structure.

Based on the special structure of $ (\mathbf{A,B})$ of $\mathcal{S}_1$ and $\mathcal{S}_2$, we have the following relationships between every element of state vectors of these systems:
\begin{equation} \mathcal{S}_1:
\begin{cases}
x_t^{(1)} = r x_{t-1}^{(1)} + v x_{t-1}^{(2)}  + w^{(1)}_{t-1}\\
x_t^{(j)} = v x_{t-1}^{(j+1)} + w^{(j)}_{t-1}, \;\text{for} \; j=2,\dots,n-1\\
x_t^{(n)} = v \ub_{t-1} + w^{(n)}_{t-1}
\end{cases},
\label{eq144}
\end{equation}
\begin{equation}
\mathcal{S}_2 :
\begin{cases}
x_t^{(1)} = r x_{t-1}^{(1)} + v x_{t-1}^{(2)} +    m  \ub_{t-1}  + w^{(1)}_{t-1} \\
x_t^{(j)} = v x_{t-1}^{(j+1)} + w^{(j)}_{t-1}, \;\text{for} \; j=2,\dots,n-1\\
x_t^{(n)} = v \ub_{t-1} + w^{(n)}_{t-1}
\end{cases}.
\label{eq65}
\end{equation}
Due to (\ref{eq144}), (\ref{eq65}) and the fact that $w^{(j)}_{t}$ for $j=1,\dots,n$ are mutually independent, 
we   have for $i=1$ and $2$, 
\begin{equation}
\begin{aligned}
&f^t_{i}\left(\mathbf{x}_{t} \mid \mathbf{x}_{t-1},\ub_{t-1}\right)\\
=& f^t_{i}\left(x_{t}^{(1)} \mid x_{t-1}^{(1)},x_{t-1}^{(2)}, \ub_{t-1}\right)   \cdot \\
&\prod^{n-1}_{j=2} f^t_{i}\left(x_{t}^{(j)}  \mid x_{t-1}^{(j+1)} \right) f^t_{i}\left(x_t^{(n)}\mid \ub_{t-1}\right).
\end{aligned}
\label{eq66}
\end{equation}
According to (\ref{eq144}) and (\ref{eq65}), we also   have
\begin{equation}
\begin{cases}
\begin{aligned}
&f^t_{1}\left(x_{t}^{(j)}  \mid x_{t-1}^{(j+1)} \right) = f^t_{2}\left(x_{t}^{(j)}  \mid x_{t-1}^{(j+1)} \right) \\
&\quad \;\text{for} \; j=2,\dots,n-1 
\end{aligned}\\
f^t_{1}\left(x_t^{(n)}\mid \ub_{t-1}\right) = f^t_{2}\left(x_t^{(n)}\mid \ub_{t-1}\right)
\end{cases},
\label{eq67}
\end{equation} 
and
\begin{equation}
\begin{cases}
\begin{aligned}
&f^t_{1}\left(x_{t}^{(1)} \mid x_{t-1}^{(1)},x_{t-1}^{(2)},\ub_{t-1}\right) \sim \\
& \quad \mathcal{N} \left(r x_{t-1}^{(1)} + v x_{t-1}^{(2)} ,  \sigma^2_{w}\right) 
\end{aligned}\\
\begin{aligned}
& f^t_{2}\left(x_{t}^{(1)} \mid x_{t-1}^{(1)},x_{t-1}^{(2)},\ub_{t-1}\right) \sim \\
& \quad \mathcal{N} \left(r x_{t-1}^{(1)} + v x_{t-1}^{(2)} + m \ub_{t-1},   \sigma^2_{w} \right)
\end{aligned}
\end{cases},
    \label{eq75}
\end{equation}\noindent{where $\mathcal{N} $ denotes the Gaussian distribution.}

Then, (\ref{eq69})    is equal to
{\small
\begin{align}    \label{eq68}
\begin{split}
\operatorname{KL}\left(\mathbb{P}^N_{1}, \mathbb{P}^N_{2}\right)&= 
 \sum^{N}_{t=1} \mathbb{E}^t_{1} \left[ \log \frac{ f^t_{1}\left(x_{t}^{(1)} \mid x_{t-1}^{(1)},x_{t-1}^{(2)}, \ub_{t-1}\right) }{ f^t_{2}\left(x_{t}^{(1)} \mid x_{t-1}^{(1)},x_{t-1}^{(2)}, \ub_{t-1}\right)}   \right]  \\
 &= \sum^{N}_{t=1} \mathbb{E}^{t}_{1} \left[  \frac{ \left( w^{(1)}_{t-1} + m \ub_{t-1} \right)^2 - \left( w^{(1)}_{t-1}\right)^2 }{2  \sigma^2_{w} } \right]\\
&= \sum^{N}_{t=1} \mathbb{E}^{t}_{1} \left[  \frac{\left( m \ub_{t-1} \right)^2}{2  \sigma^2_{w} } \right]\\
&\leq \frac{Nm^2 \sigma_u^2}{2\sigma_w^2 } ,
\end{split}
\end{align}
}\noindent{where the first equality is due to (\ref{eq66}) and (\ref{eq67}), the second equality is due to \eqref{eq144}, (\ref{eq75}), and {the definition of the Gaussian distribution, the third equality is by the noise process being zero mean and $w^{(1)}_{t-1}$ and $\ub_{t-1}$} being independent, and the last inequality is due to Assumption \ref{asm2}.}
\end{proof}

\subsection{Proof of Theorem \ref{thm1}}
\label{appc}
Before presenting the proof of Theorem \ref{thm1}, we first introduce Birgé's inequality, a classical inequality from information theory.
\begin{lemma}[Birgé's Inequality, Theorem 4.21 in \cite{boucheron2013concentration}]Let $\Omega$ be a set and $\mathcal{E}$ be a $\sigma$-algebra on the set $\Omega$. Let $\mathbb{P}_1, \mathbb{P}_2$ be probability measures on the probability space $(\Omega, \mathcal{E})$ and let $E_1, E_2 \in \mathcal{E}$ be disjoint events. If $1-\delta \triangleq \min _{i=1,2} \mathbb{P}_i\left(E_i\right) \geq 1 / 2$ then
$$
\operatorname{KL}\left(\mathbb{P}_1, \mathbb{P}_2\right) \geq (1-\delta) \log \frac{1-\delta}{\delta}+\delta \log \frac{\delta}{1-\delta}.
$$
    \label{lem9}
\end{lemma}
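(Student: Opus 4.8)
The plan is to collapse the two abstract measures to a pair of Bernoulli distributions via the data-processing inequality and then to invoke monotonicity of the binary relative entropy. First I would retain only the event $E_1$ and consider the binary coarsening $T:\Omega\to\{0,1\}$ defined by $T=1$ on $E_1$ and $T=0$ on $E_1^c$. The law of $T$ under $\mathbb{P}_1$ is $\mathrm{Bern}(a)$ with $a:=\mathbb{P}_1(E_1)$, and under $\mathbb{P}_2$ it is $\mathrm{Bern}(b)$ with $b:=\mathbb{P}_2(E_1)$. The data-processing inequality for KL divergence (which, for a measurable partition, is an immediate consequence of the log-sum inequality, i.e.\ convexity of $t\mapsto t\log t$) then yields
\[
\operatorname{KL}(\mathbb{P}_1,\mathbb{P}_2)\;\geq\;\operatorname{KL}\!\big(\mathrm{Bern}(a),\mathrm{Bern}(b)\big)\;=\;d(a,b),
\]
where $d(a,b):=a\log\frac{a}{b}+(1-a)\log\frac{1-a}{1-b}$ is the binary relative entropy. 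If $\mathbb{P}_1$ is not absolutely continuous with respect to $\mathbb{P}_2$ the left-hand side is $+\infty$ and there is nothing to prove, so I may assume the divergence is finite.

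Next I would extract the admissible ranges of $a$ and $b$ from the hypotheses. By definition $1-\delta=\min_i\mathbb{P}_i(E_i)$, so directly $a=\mathbb{P}_1(E_1)\geq 1-\delta$. The disjointness $E_1\cap E_2=\emptyset$ forces $E_1\subseteq E_2^c$, whence $b=\mathbb{P}_2(E_1)\leq \mathbb{P}_2(E_2^c)=1-\mathbb{P}_2(E_2)\leq \delta$. Because the hypothesis gives $1-\delta\geq 1/2$, i.e.\ $\delta\leq 1-\delta$, these two bounds combine into the chain $a\geq 1-\delta\geq\delta\geq b$; in particular $a\geq b$, which is the regime in which the binary relative entropy behaves monotonically.

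Finally I would invoke that monotonicity to compare $d(a,b)$ with the target value. A one-line differentiation gives $\partial_a d=\log\frac{a(1-b)}{b(1-a)}\geq 0$ and $\partial_b d=\frac{b-a}{b(1-b)}\leq 0$ whenever $a\geq b$, so $d$ is non-decreasing in its first argument and non-increasing in its second throughout the region $a\geq b$. Lowering $a$ to $1-\delta$ and then raising $b$ to $\delta$---two moves that each preserve $a\geq b$---can therefore only decrease $d$, giving
\[
d(a,b)\;\geq\;d(1-\delta,\delta)\;=\;(1-\delta)\log\frac{1-\delta}{\delta}+\delta\log\frac{\delta}{1-\delta},
\]
which is precisely the claimed lower bound. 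I expect the only point requiring care to be the bookkeeping of these two monotone moves so that $a\geq b$ holds at every intermediate step---this is exactly where the hypothesis $1-\delta\geq 1/2$ is used; the data-processing reduction and the signs of the partial derivatives are routine.
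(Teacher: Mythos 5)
Your proof is correct, but note that the paper does not prove this lemma at all: it is imported verbatim as a known result (Theorem 4.21 of Boucheron--Lugosi--Massart \cite{boucheron2013concentration}) and used as a black box in the proof of Theorem \ref{thm1}, so there is no in-paper argument to compare against. Your derivation --- reduce to Bernoulli laws by applying the data-processing inequality to the indicator of $E_1$, then slide $(a,b)=(\mathbb{P}_1(E_1),\mathbb{P}_2(E_1))$ to the extreme point $(1-\delta,\delta)$ using monotonicity of the binary relative entropy --- is the standard two-hypothesis specialization, and each step checks out: disjointness gives $b\le 1-\mathbb{P}_2(E_2)\le\delta$; the hypothesis $1-\delta\ge 1/2$ is used exactly where you say, to guarantee $a\ge 1-\delta\ge\delta\ge b$ so that $a\ge b$ holds along both monotone moves; and the derivative computations $\partial_a d=\log\frac{a(1-b)}{b(1-a)}\ge 0$ and $\partial_b d=\frac{b-a}{b(1-b)}\le 0$ on the region $a\ge b$ are correct, as is your dismissal of the non-absolutely-continuous case. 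For the record, the cited source proves a more general $N$-hypothesis version by a different argument; your elementary route buys a short, self-contained proof of precisely the two-event case the paper needs, at the cost of not extending as directly to many hypotheses. The only cosmetic gaps are the boundary cases $b\in\{0,1\}$ and $\delta=0$, where $d$ is $+\infty$ or the derivative formulas require one-sided limits; a single sentence dispatching these would make the argument airtight.
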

 Next, we present the proof of Theorem \ref{thm1}.
 
\begin{proof}
Given $\mathcal{S}_1$ and $\mathcal{S}_2$, let us define two events:
\begin{align}
\begin{split}
    \nonumber E_1 &= \left\{ \ub_{0:N-1}, \x_{1:N} \mid  \rho \left(\mathbf{A}+\mathbf{B}_1 \pi_N(\ub_{0:N-1}, \x_{1:N})\right) < 1  \right\}, \\
 \nonumber  E_2 &= \left\{    \ub_{0:N-1}, \x_{1:N} \mid   \rho \left(\mathbf{A}+\mathbf{B}_2 \pi_N(\ub_{0:N-1}, \x_{1:N})\right) < 1 \right\}.
\end{split}
\label{eq61}
\end{align}

Since $m = 2\left( \frac{2v}{r-1} \right)^n > v^n \prod_{i=1}^{n} \frac{1+p_i^{cl}}{r -p_{i}^{cl}}$ for any stable closed-loop poles, by Proposition~\ref{prop2}, $\mathcal{S}_1$ and $\mathcal{S}_2$ cannot be co-stabilized.
Hence, $E_1$ and $E_2$ are disjoint events.

Suppose \eqref{eq56} is true, which implies
\begin{equation}
    \begin{cases}
\mathbb{P}^N_{\mathcal{S}_1,\pi}
\left( E_1\right) \geq 1-\delta,\\
\mathbb{P}^N_{\mathcal{S}_2,\pi}
\left( E_2\right) \geq 1-\delta.
    \end{cases}
    \label{eq60}
\end{equation}

Therefore, we can apply Lemma \ref{lem9} to obtain
\begin{equation}
\begin{aligned}
    \operatorname{KL}\left(  \mathbb{P}^N_{\mathcal{S}_1,\pi},   \mathbb{P}^N_{\mathcal{S}_2,\pi} \right) &\geq (1-\delta) \log \frac{1-\delta}{\delta}+\delta \log \frac{\delta}{1-\delta}\\
    & \geq \log \left( \frac{1}{3 \delta} \right).
    \end{aligned}
    \label{eq63}
\end{equation}     
According to Proposition \ref{prop1}, the KL divergence between $ \mathbb{P}^N_{\mathcal{S}_{1},\pi}$ and $ \mathbb{P}^N_{\mathcal{S}_{2},\pi}$  satisfies
\begin{equation}\operatorname{KL}\left(\mathbb{P}^N_{\mathcal{S}_{1},\pi}, \mathbb{P}^N_{\mathcal{S}_{2},\pi}\right) \leq \frac{N m^2 \sigma_u^2}{2\sigma_w^2 } \leq \frac{2N\ \sigma_u^2}{\sigma_w^2 } \left( \frac{2v}{r-1} \right)^{2n}.
\label{eq57}
\end{equation}
Combining (\ref{eq57}) and (\ref{eq63}), we have that (\ref{eq56}) holds only if
\begin{equation}
N\geq  \frac{ \sigma_w^2}{2 \sigma_u^2  } \left( \frac{r-1 }{2v} \right)^{2n} \log \left( \frac{1}{3 \delta} \right).
    \label{eq59}
\end{equation}
\end{proof}

\bibliographystyle{IEEEtran}
\bibliography{references}
\end{document}